\documentclass[journal,onecolumn, 12pt]{IEEEtran}
\usepackage{cite}
\usepackage{amsmath,amssymb,amsfonts}
\usepackage{graphicx}
\usepackage{textcomp}
\usepackage[margin=2cm]{geometry}
\usepackage{graphics} 
\usepackage{epsfig} 
\usepackage{mathptmx} 
\usepackage{times} 
\usepackage{amsmath} 
\usepackage{amssymb}  
\usepackage{cite}
\usepackage{dirtytalk}
\usepackage{float}
\usepackage{longtable}
\usepackage{caption}
\usepackage{supertabular,booktabs}
\usepackage{blindtext}
\usepackage{multicol}
\usepackage{cellspace}%
\usepackage[english]{babel}
\usepackage{amsthm}

\usepackage{anyfontsize}
\usepackage{makecell}
\setcellgapes{3pt}
\usepackage{hyperref}
\hypersetup{hidelinks}
\hypersetup{
    colorlinks,
    linkcolor={blue},
    citecolor={blue},
    urlcolor={blue},
    }

\theoremstyle{plain}
\newtheorem{prop}{Proposition}

\theoremstyle{remark}
\newtheorem{rem}{Remark}

\newcommand\numberthis{\addtocounter{equation}{1}\tag{\theequation}}
\usepackage{tikz}
\usetikzlibrary{shapes.geometric, arrows, positioning}
\usepackage[linesnumbered,ruled,vlined]{algorithm2e}
\SetKw{Try}{try}
\SetKw{Catch}{catch}
\SetKw{Throw}{throw}
\SetKw{Return}{return}
\SetKw{Break}{break}

\SetKwComment{Comment}{$\triangleright$\ }{}

\begin{document}
%
\title{Block Diagram Analysis of a Design Principle for Amplitude-Frequency Profiles in Biological Oscillations}
%
%
%

\author{Sidhanta Mohanty, \IEEEmembership{} Shaunak Sen \IEEEmembership{}
\thanks{This work was supported by the ANRF MATRICS program MTR/2023/000745.}
\thanks{$^1$Sidhanta Mohanty and $^2$Shaunak Sen are with the Department of Electrical Engineering, Indian Institute of Technology,
        New Delhi, India.
        \{$^1$\tt\small sidhanta.mohanty, $^2$\tt\small shaunak.sen\}@ee.iitd.ac.in}
}

\maketitle

\begin{abstract}
An important design principle for biological oscillators divides the oscillators into two classes: fixed frequency, variable amplitude and fixed amplitude, variable frequency. 
Because of the interplay of nonlinearity and feedback, both positive and negative, analytical investigations of this design principle are primarily based on numerical simulations of ordinary differential equations. 
To enhance the qualitative and quantitative characterization, we adapted and developed a block diagram modeling framework. 
We showed how the observed amplitude-frequency characteristics could be obtained from the block diagram models. 
We obtained constraints on the positive feedback and negative feedback strengths for the oscillations to exist. 
These results should contribute to a systems and control perspective on oscillations in biology and related contexts.
\end{abstract}


%
\IEEEpeerreviewmaketitle

\section{Introduction}
\IEEEPARstart{O}{scillatory} behaviour in biological systems is at the core of fundamental processes of life itself, such as in the cell
cycle, in the regulation of circadian rhythms and in neuroscience \cite{b1}.
Therefore, an enormous amount of studies, both phenomenological and mechanism-oriented, have been devoted to understand oscillatory behaviour and to design clocks. 
These studies have shown that the oscillation mechanisms underlying biological clocks are constituted from a dynamic interplay of nonlinear positive and negative feedback \cite{b2, b3}. 

Biological oscillators have been categorized into two types depending on whether their amplitude can be tuned for a fixed frequency or their frequency can be tuned for a fixed amplitude \cite{b4}.
This is biologically meaningful as some oscillators like in the human heart may need to have the same amplitude for a wide array of frequencies, while some oscillators need to be tuned to particular frequency patterns like the day-night cycle without major amplitude constraints. 
The oscillators with negative feedback were shown to be effectively operative at a single frequency. 
Different amplitudes near this frequency were possible but a change in frequency seemed to lead to a disappearance of oscillations. 
In contrast to this, the oscillators with both positive and negative feedback effectively operate at a fixed amplitude. 
A wide range of frequencies could be accessed at this amplitude. 
A large set of oscillators were cataloged onto this amplitude-frequency space. 
Examples of the former class of oscillators include the Goodwin oscillator \cite{b5} and the repressilator \cite{b6}. 
Examples of the latter class include relaxation oscillators such as in the cell cycle.
The primary workflow in this study were numerical simulation of nonlinear ordinary differential equation models of various biological oscillators. 
In general, the dependence of the amplitude and the frequency on parameters is determined using simulations rather than from theoretical expressions.

Block diagrams have been used to design oscillators in engineering (Fig. \ref{Fig1}, see Chapter 4 in \cite{b7} for a classical reference). 
Such block diagram representations form the cornerstone of introductory and practical control engineering applications. 
Block diagram versions with negative feedback have been used to understand the systems biology of adaptation (for example in \cite{b2, b8}) and oscillations (for example in \cite{b9}).
Block diagram versions with positive feedback have been used to understand hysteretic behaviour, even in systems biology contexts (for recent examples, see \cite{b10, b11, b12} and references therein). 
Additionally, there is important theoretical work focussed on the analysis of existence of oscillations in block diagram models (for example in  \cite{b13, b14, b15, b16, b17, b18}). 
Therefore, block diagram representations may provide qualitative and quantitative insight into the above design principle related to the amplitude-frequency profiles of biological oscillators, which to the best of our understanding has not been reported previously.

\begin{figure}[htbp]
    \begin{center}
    \includegraphics[scale=0.5]{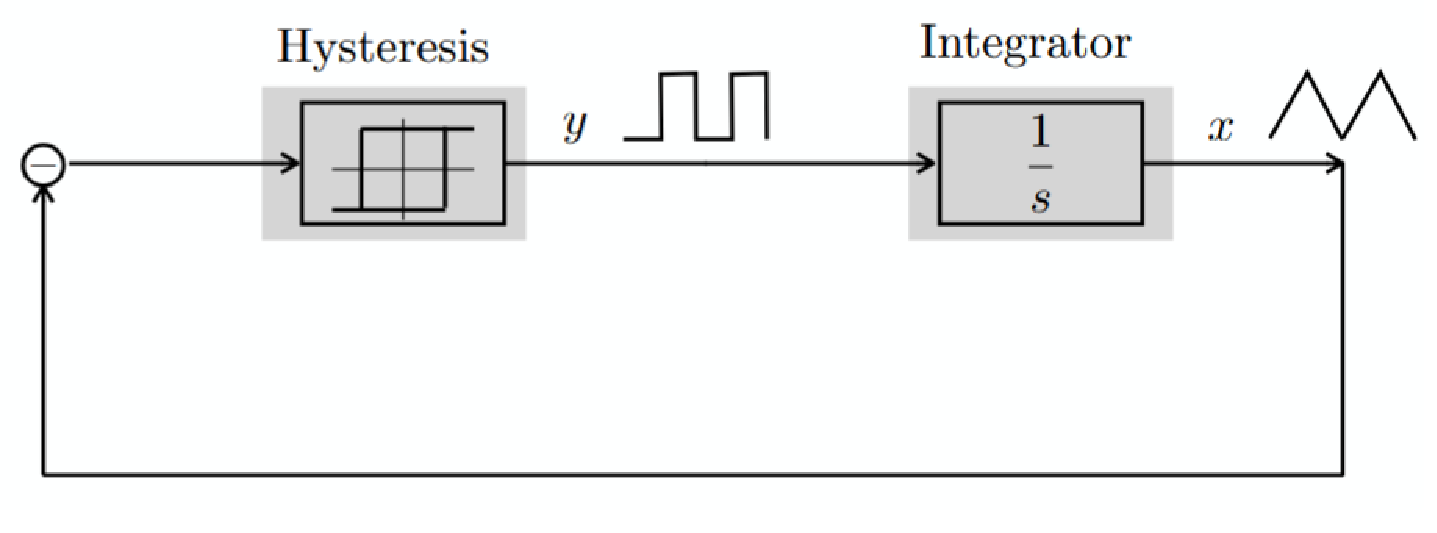}
    \caption{A hysteresis element with an integrator block in negative feedback can generate a square wave at y and a triangular wave at $x$. Based on Chapter 4 in \cite{b7}.}
    \label{Fig1}
    \end{center}
\end{figure}

We asked whether and how block diagram representations could be used to derive the design principle of the two types of amplitude-frequency profiles in biological oscillators. 
We showed that the fixed amplitude, variable frequency profile can arise from a saturation function embedded in an inner positive feedback loop together with an outer negative feedback loop with linear dynamics. 
We derived explicit values of the amplitude and frequency as well as the positive and
negative feedback strengths required for such behaviour. 
We showed that the fixed frequency, variable amplitude profile can arise from a saturation function embedded in a negative feedback loop with linear dynamics. 
We derived explicit values of the amplitude and frequency as well as the feedback
strength required for this behaviour. 
We showed how both oscillator types could be unified in a single block diagram perspective. 
These results are a step towards developing analogous results and insights in realistic biological oscillator models.

\section{Fixed Amplitude, Variable Frequency}
We considered a quantitative generalization of the block diagram shown in the previous section (Fig. \ref{Fig2}). 
The output oscillation waveform is $y(t)$. 
The saturation function had a linear behaviour when the input $u \in [-a, a] $ and saturated at $\pm b$,
    \begin{align}
        y =
            \begin{cases}
                -b, &\text{if } u < -a\\
                \dfrac{b}{a}u, &\text{if } -a \leq u < a\\
                b, &\text{if } u \geq a.
            \end{cases}
    \end{align}
Such saturation functions are almost universal features in biological systems, arising directly from resource constraints. 
This saturation function is wrapped inside a positive feedback loop with gain $K_+$. 
This entire block is wrapped in an outer negative feedback loop with a block $G(s) =
\dfrac{1}{1 + \dfrac{s}{\alpha}}$ and feedback gain $K_-$. 
This is a more realistic form of the integrator dynamics considered in Fig. 1. 
The output of $G(s)$ is denoted by $x$.

\begin{figure}[htbp]
    \begin{center}
    \includegraphics[scale=0.5]{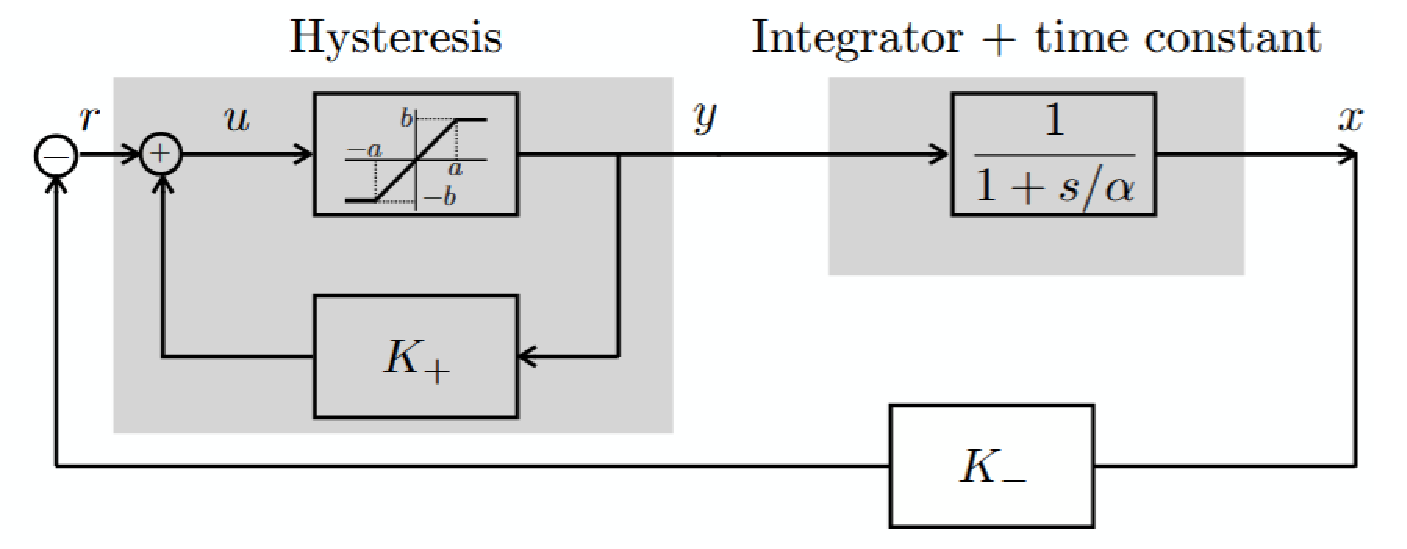}
    \caption{Block diagram representation of a fixed amplitude, variable frequency oscillator.}
    \label{Fig2}
    \end{center}
\end{figure}

\begin{prop}
    If $K_+ > \dfrac{a}{b}$ and $K_- > K_+ - \dfrac{a}{b}$ in the block diagram shown in Fig. 2, the amplitude-frequency characteristic is horizontal with amplitude $b$ and frequency tunable by $\alpha, ~a, ~b, ~K_+$ and $K_-$.
\end{prop}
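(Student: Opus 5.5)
The plan is to reduce the inner positive-feedback loop to an equivalent relay with hysteresis, in the spirit of Fig.~\ref{Fig1}. Then the outer loop is solved explicitly as a first-order linear system driven by a piecewise-constant input. Write the loop equations as $u = K_+ y - K_- x$, $y = \mathrm{sat}(u)$ with the saturation defined above, and $\dot x = \alpha(y - x)$, which is the time-domain form of $G(s)$. The key object is the static map from $x$ to the admissible values of $y$, defined by the fixed-point equation $y = \mathrm{sat}(K_+ y - K_- x)$ with $x$ treated as a parameter.

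\textbf{Step 1 (hysteresis from the inner loop).} For fixed $x$ I will classify the solutions of $y = \mathrm{sat}(K_+ y - K_- x)$.
\begin{itemize}
\item The branch $y = b$ is consistent iff $K_+ b - K_- x \ge a$, that is, $x \le x_T := (K_+ b - a)/K_-$.
\item The branch $y = -b$ is consistent iff $x > -x_T$.
\item On the linear branch, $y = \frac{b}{a}(K_+ y - K_- x)$. Here the loop gain is $K_+ b/a$.
\end{itemize}
The hypothesis $K_+ > a/b$ makes this gain exceed one. Hence $x_T > 0$, both saturated branches coexist on $(-x_T, x_T]$, and the linear branch is the unstable middle branch. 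I would justify the instability by adding an arbitrarily fast lag in the inner loop and noting that the linearization about the middle branch has a positive eigenvalue. The inner loop therefore acts as a relay with output $\pm b$ and switching thresholds $\pm x_T$: $y$ stays at $b$ until $x$ rises to $x_T$, jumps to $-b$, and stays there until $x$ falls to $-x_T$.

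\textbf{Step 2 (existence of the oscillation).} While $y = \pm b$, $x$ relaxes exponentially toward $\pm b$. A switch can occur only if the threshold is reached before the equilibrium, that is, $x_T < b$. This inequality is equivalent to $K_+ b - a < K_- b$, which is the second hypothesis $K_- > K_+ - a/b$. If it failed, $x$ would settle at $\pm b$ inside the bistable region and there would be no oscillation. Under the hypotheses, every trajectory reaches the periodic orbit after at most one switch. On that orbit $x$ oscillates between $-x_T$ and $x_T$, and $y$ is a square wave of amplitude exactly $b$.

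\textbf{Step 3 (amplitude and frequency).} On the rising half-cycle, $x(t) = b - (b + x_T)e^{-\alpha t}$ starting from $x(0) = -x_T$. Setting $x = x_T$ gives the half-period
\begin{align*}
T_{1/2} = \frac{1}{\alpha}\ln\frac{b + x_T}{b - x_T}.
\end{align*}
The falling half-cycle has the same duration by symmetry. Hence the frequency is
\begin{align*}
\omega = \frac{\pi\alpha}{\ln\dfrac{b + x_T}{b - x_T}}, \qquad x_T = \frac{K_+ b - a}{K_-}.
\end{align*}
The amplitude of $y$ is $b$ whatever the value of $\omega$. Meanwhile $\omega$ depends on $\alpha$, $a$, $b$, $K_+$ and $K_-$, and it sweeps continuous ranges: for instance, it scales linearly with $\alpha$ while the amplitude stays fixed. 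This gives the horizontal amplitude-frequency characteristic.

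\textbf{Main obstacle.} The delicate step is Step 1, namely making rigorous that the algebraic positive-feedback loop produces instantaneous jumps between the saturated branches rather than an ill-posed algebraic loop. I would handle it by regularising with a fast lag $\epsilon\dot y = \mathrm{sat}(K_+ y - K_- x) - y$. Standard singular-perturbation (relaxation-oscillation) arguments then show that the slow dynamics follow the stable branches and jump at the fold points $\pm x_T$ as $\epsilon \to 0$. The remaining steps are then elementary computations.
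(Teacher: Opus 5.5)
Your proposal is correct and follows essentially the same route as the paper: you reduce the inner loop to a relay with output $\pm b$ and thresholds $\pm(K_+ b - a)/K_-$, identify the switching condition $x_T < b$ as exactly $K_- > K_+ - a/b$, and obtain a half-period $\frac{1}{\alpha}\ln\frac{b+x_T}{b-x_T}$ that reproduces the paper's period $T$ (you report the angular frequency $2\pi/T$, whereas the paper reports $1/T$). Your fast-lag singular-perturbation justification of the jumps goes beyond the paper, which simply asserts the hysteresis of the inner loop by citing prior work, and your derivation also avoids the paper's slips ($K_+ - ab$ for $K_+ - a/b$, and the wrong asymptote $+b$ in the falling half-cycle).
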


\begin{proof}
    The inner positive feedback loop exhibits hysteresis if $K_+ > \dfrac{a}{b}$ (\cite{b12}, see also \cite{b11}). 
    We repeat the core idea here for completeness. 
    Suppose the output of the saturation block $y = b$. 
    Then $u = r + K_+ y = r + K_+ b > a \Rightarrow r > a - K_+ b$. 
    Similarly, if $y = - b \Rightarrow r < - a + K_+ b$. 
    The $r-axis$ and $y-axis$ limits of the hysteresis are $\pm (a - K_+ b)$ and $\pm b$, respectively. 
    Suppose the output $y$ of the hysteresis block is $b$. Then, the variable $x$ at the output of the first-order transfer function is
    \begin{equation}
        x(t) = b\big(1-e^{-\alpha t}\big) + x(0) e^{-\alpha t} 
    \end{equation} 
    where $x(0)$ is the initial condition at time $t = 0$. 
    Due to the negative feedback, the input to the hysteresis block is $r = -K_-x$. 
    This is able to trigger the transition $y = b \rightarrow -b$ if $- K_- x = a - K_+ b$. 
    Since the maximum value of $x$ is $b$, this condition becomes $- K_- b < a - K_+ b \Rightarrow K_- > K_+ - a b$ .
    The value of $x$ at the time $t_1$ of this transition $y = b \rightarrow -b$ is $x(t_1) = \dfrac{K_+ b - a}{K_-}$. 
    Proceeding along similar lines, when the output of the hysteresis block $y = -b$, the variable $x$ after the first-order transfer function block is
    \begin{equation}
        x(t) = b\big(1-e^{-\alpha (t-t_1)}\big) + x(0) e^{-\alpha (t-t_1)} 
    \end{equation}
    This triggers the transition $y(t) = - b \rightarrow b$ if $-K_- x = - a + K_+ b$. 
    Since the minimum value of $x$ is $-b$, this condition becomes $K_- b > - a + K_+ b \Rightarrow K_- > K_+ - a b$ . 
    The value $x$ at the time of the transition $y = - b \rightarrow b$ is $\dfrac{-(K_+ b - a)}{K_-}$.
    As this completes the cycle, this value is the same as $x(0)$.
    The time-period $T$ can be obtained by solving
    \begin{align*}
        \dfrac{K_+ b - a}{K_-} &= b\Big(1-e^{-\alpha T/2}\Big) + \dfrac{-(K_+ b - a)}{K_-} e^{-\alpha T/2},\\
        \implies T &= \dfrac{2}{\alpha} \ln \Bigg\{ \dfrac{K_- + K_+ - \dfrac{a}{b}}{K_- - \Big( K_+ - \dfrac{a}{b}\Big)} \Bigg\}.  \numberthis
    \end{align*}
    The frequency is $\dfrac{1}{T}$ . 
    Therefore, the amplitude of the output $y$ is fixed at $b$ and the frequency is tunable by the other parameters.
\end{proof}

\begin{rem}
    We noted that this representation provided a clear delineation of the minimum positive and negative feedback strengths needed for oscillations. 
    These conditions were expressed in terms of the parameters of the saturation function and the dynamics.
    Particularly notable is the relation of the positive and negative feedback strengths relative to each other (Fig. \ref{Bifurcation}).

    \begin{figure}[htbp]
        \begin{center}
        \includegraphics[trim={1.35cm 0 1.6cm 0}, clip, scale=0.75]{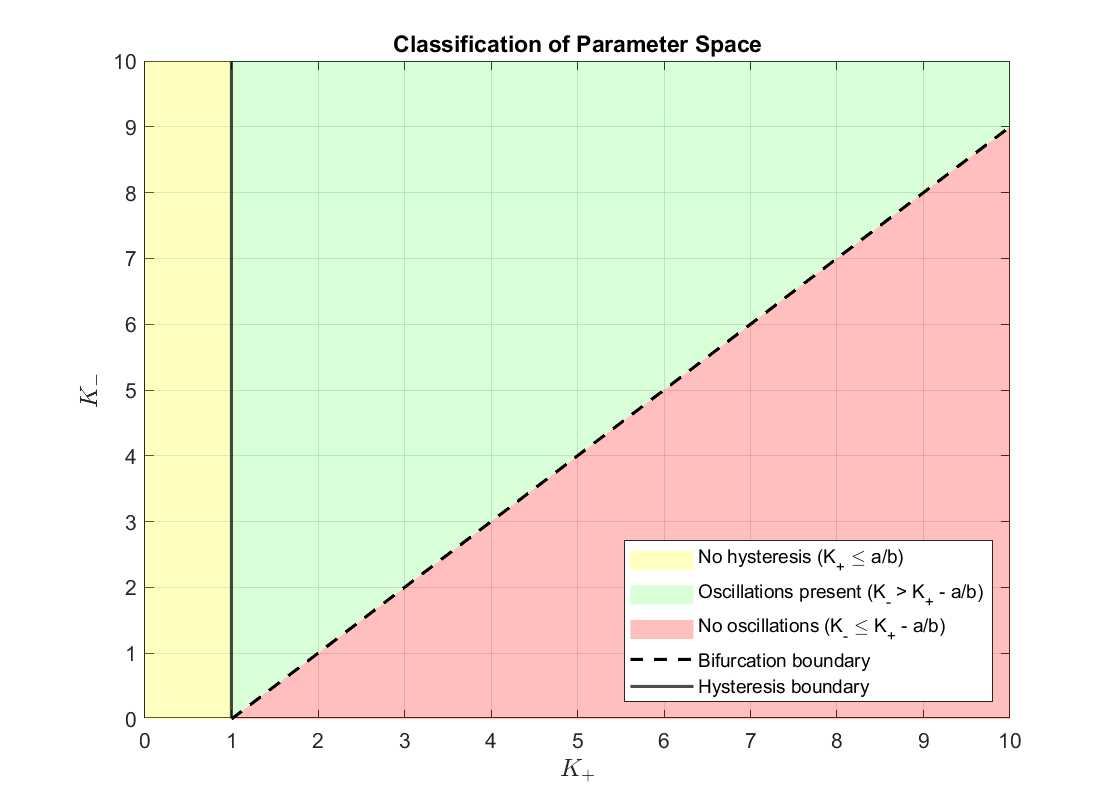}
        \caption{Relation between the positive and negative feedback strength for $a = b = 1$.}
        \label{Bifurcation}
        \end{center}
    \end{figure}
\end{rem}

\section{Fixed Frequency, Variable Amplitude}
We next considered a variation of the block diagram with a shift in the emphasis from the nonlinear block to the linear dynamics block (Fig. \ref{Fig4}). 
Specifically, we considered the nonlinear block to be a saturation function and the linear dynamics block to be of the form $G(s) = \dfrac{1}{\bigg(1 + \dfrac{s}{\alpha} \bigg)^n}$ . 
The particular form of the linear dynamics block was chosen as the simplest representation that paralleled the structure of biological oscillators.

\begin{figure}[htbp]
    \begin{center}
    \includegraphics[scale=0.5]{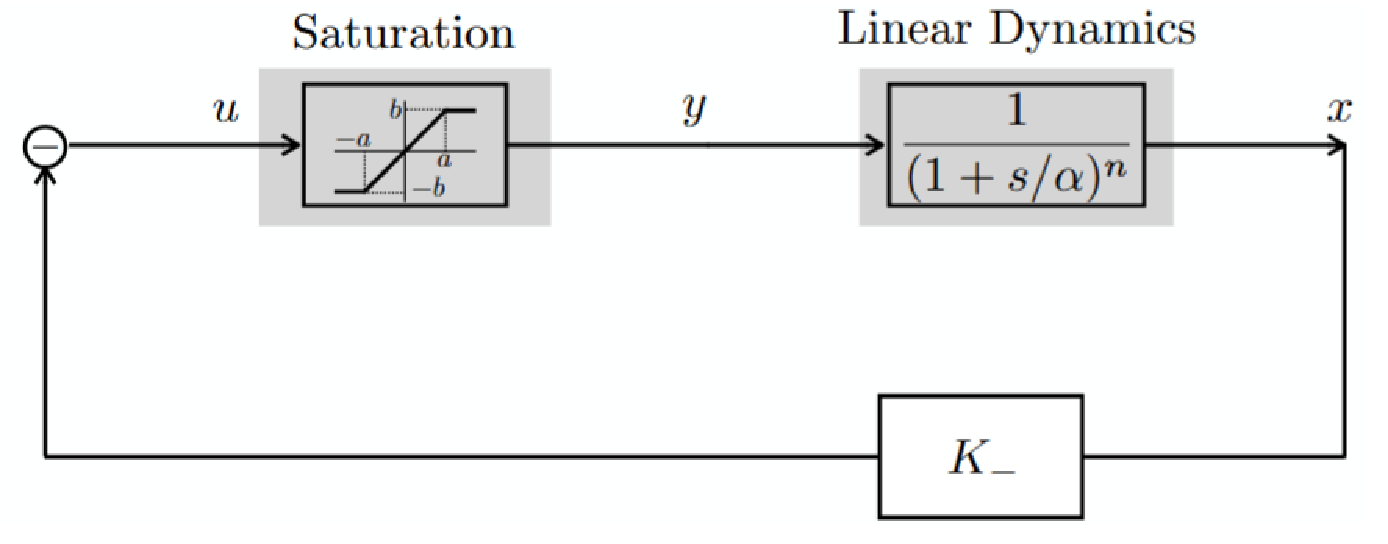}
    \caption{Block diagram representation of a fixed frequency, variable amplitude oscillator.}
    \label{Fig4}
    \end{center}
\end{figure}

\begin{prop}
    If the root locus $1 + KG(s) = 0$ intersects the imaginary axis at $\pm j \omega_0$ for $K = K_0$ with all other roots in the left-half of the complex plane, and $K_- \dfrac{b}{a} = K_0$, then the amplitude-frequency profile in the block diagram in Fig. \ref{Fig4} has a vertical part with the frequency $\omega_0$ and any amplitude in the range $[0, b)$.
\end{prop}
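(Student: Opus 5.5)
The idea is that, while the saturation operates in its linear region, the loop in Fig. \ref{Fig4} is an ordinary linear feedback system. Oscillations of any amplitude inside that region then sit at the marginally stable frequency $\omega_0$. I would set the loop up as $u = -K_- x$, $y = \mathrm{sat}(u)$, $X(s) = G(s)Y(s)$. Whenever $|u(t)| < a$ for all $t$, we have $y = \frac{b}{a}u$, and the closed loop is linear with characteristic equation $1 + K_-\frac{b}{a}G(s) = 0$. This is exactly the root-locus equation $1 + KG(s) = 0$ evaluated at $K = K_-\frac{b}{a}$. The hypothesis $K_-\frac{b}{a} = K_0$ then places one conjugate pair of closed-loop poles at $\pm j\omega_0$ and all others strictly in the open left half-plane.

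The next step is to write the general solution of this linear system as a sum of modal contributions:
\[
x(t) = c\cos(\omega_0 t + \phi) + \sum_i p_i(t)e^{\lambda_i t},
\]
where $\mathrm{Re}\,\lambda_i < 0$. The transient terms decay, so every trajectory that stays in the linear region converges to a pure sinusoid of frequency $\omega_0$. Its amplitude $c$ is set by the initial conditions, i.e. by the projection of the initial state onto the center eigenspace. Restricting initial conditions to that two-dimensional invariant subspace gives exact periodic solutions with
\[
u(t) = -K_- c\cos(\omega_0 t + \phi), \qquad y(t) = -\frac{b}{a}K_- c\cos(\omega_0 t + \phi),
\]
and $c$ can be chosen freely.

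The self-consistency condition is that the solution never leaves the linear region, i.e. $|u(t)| < a$, which is the same as requiring the output amplitude $A = \frac{b}{a}K_- c$ to satisfy $A < b$. For every $A \in [0, b)$ the sinusoid never reaches the saturation corners, so it is a genuine solution of the nonlinear loop. Hence every amplitude in $[0, b)$ is attained at the single frequency $\omega_0$, which gives the vertical segment of the amplitude-frequency profile. I would also note briefly that $A = b$ is excluded because the peak would touch the corner $|u| = a$; this explains the half-open interval. If desired, one can add that beyond this range the describing function of the saturation drops below $b/a$, so larger oscillations would require a different frequency or vanish.

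I expect the main obstacle to be the invariance and well-posedness argument. I need to show that trajectories starting in the center eigenspace with small enough amplitude stay in the linear region for all time. Since the trajectory is an exact sinusoid, this is immediate. The subtler issue is whether the statement also means attracting solutions for general initial conditions. Here the transients might push $|u|$ above $a$ early on, even though the asymptotic amplitude is below $b$. I would handle this by claiming only existence of the periodic orbits, which is what the amplitude-frequency profile records. If more is wanted, I would restrict to initial conditions whose transient stays within $|u| < a$, which the exponential decay of the stable modes makes possible near the center subspace.
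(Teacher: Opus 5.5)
Your argument is correct. The mechanism is the same as the paper's, but you formalize it in the time domain rather than through frequency response.

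\textbf{The paper's route.} The paper circulates a test signal $y(t)=A\sin\omega_0 t$ around the loop. It uses $|G(j\omega_0)|=1/K_0$ and $\angle G(j\omega_0)=\pi$ to find that the saturation input has amplitude $AK_-/K_0=Aa/b<a$. The signal therefore returns unchanged and is declared a solution.

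\textbf{Your route.} You observe that in the unsaturated region the loop reduces to the linear closed-loop system whose characteristic roots are those of $1+K_0G(s)=0$. You then obtain the periodic orbits exactly, as trajectories in the center eigenspace of $\pm j\omega_0$. The condition $|u|<a$, equivalently $A<b$, is the consistency check.

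\textbf{What each buys.} For existence alone, both arguments use only $1+K_0G(j\omega_0)=0$; the left-half-plane hypothesis plays no role in the paper's proof. The paper's computation implicitly places $G$ in sinusoidal steady state. Your choice of initial state in the center eigenspace makes explicit that the circulated sinusoid is an exact trajectory of the nonlinear loop. Your route also uses the left-half-plane hypothesis to show something the paper does not. Every trajectory that stays unsaturated converges to an $\omega_0$-sinusoid, so the periodic orbits in the linear regime form exactly this one-parameter family. In other words, that part of the profile is genuinely vertical, with no other frequencies present.

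\textbf{One side remark to drop.} Your claim that $A=b$ is excluded because the peak would touch the corner is not right. By the paper's definition the saturation equals $\frac{b}{a}u$ on the whole closed interval $[-a,a]$, since $y=b$ at $u=a$. Hence the sinusoid with $A=b$ is also an exact solution. The half-open interval in the statement is conservative rather than sharp. This does not affect your proof of the statement as posed.
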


\begin{proof}
    The proof relies on the circulation of a test sinusoidal signal through the loop. 
    Consider a sinusoidal signal $y(t) = A \sin \omega_0 t$ injected into the linear dynamics block. 
    The output of the linear dynamic block has magnitude $A\mid G(j\omega_0)\mid$ and phase $\angle G(j\omega_0)$. 
    Because $1+K_0 G(j\omega_0) = 0$, $\mid G(j\omega_0)\mid = \dfrac{1}{K_0}$ and $\angle G(j\omega_0) = \pi$. 
    After the negative feedback, the input at the saturation block is a sinusoid with magnitude $\dfrac{AK_-}{K_0} = \dfrac{Aa}{b}$, frequency $\omega_0$, and phase $2 \pi$.
    As $A < b$, the amplitude remains in the linear region of the saturation function and the output of the saturation block is $A sin \omega_0 t$, which is the same as the signal injected into the linear dynamics block. 
    Therefore, any sinusoidal signal $y(t) = A sin \omega_0t$ with $A \in [0, b)$ is a solution.
    
    This has a fixed frequency and a variable amplitude. 
    For the saturation block to act as a linear gain block, we assume the amplitude of the oscillations multiplied by the gain $K_-$ in the feedback path should be less than $a$.
\end{proof}

\begin{rem}
    For the specific form of transfer function $G(s)$ considered, the frequency $\omega_0$ can be explicitly calculated.
    The value $(\omega_0, K_0)$ can be obtained from the solution of the $1+K_0 G(j\omega_0) = 0$.
    
    \begin{align*}
        &1+K_0 G(j\omega_0) = 0, \\
        \implies& 1 + \dfrac{K_0}{\bigg(1 + j\dfrac{\omega_0}{\alpha} \bigg)^n} = 0, \\
        \implies& \bigg(1 + j\dfrac{\omega_0}{\alpha} \bigg)^n = - K_0, \\
        \implies& 1 + j\dfrac{\omega_0}{\alpha} = K_0^{1/n} (-1)^{1/n}, \\
        \implies& 1 + j\dfrac{\omega_0}{\alpha} = K_0^{1/n} e^{(2m+1)\pi /n}, ~~m = 0,1,2,\ldots, n-1. \numberthis
    \end{align*}
    Note that no solutions exist for $n \leq 2$. For $n \geq 3$, equating the real and imaginary parts $K^{1/n}_0 \cos \phi_m = 1$ and $\omega_0 = \alpha K^{1/n}_0 \sin \phi_m$, where $\phi_m = \dfrac{(2m + 1)}{n} \pi$. 
    If $\cos \phi_m \neq 0$, $K_0 = \sec^n \phi_m$. 
    Due to the shape of the secant function, the first crossing happens when $m = 0$, $K_0 = \sec^n \phi_0$ and $\omega_0 = \alpha \tan \phi_0$. 
    Example root locus diagrams were plotted for $n = 3, 4$ (Fig. \ref{Rlocus}) and $n = 7$ (Fig. \ref{Rlocus_n7}).

    \begin{figure}[htbp]
        \begin{center}
        \includegraphics[trim={1.5cm 0 1.5cm 0}, clip,scale=0.9]{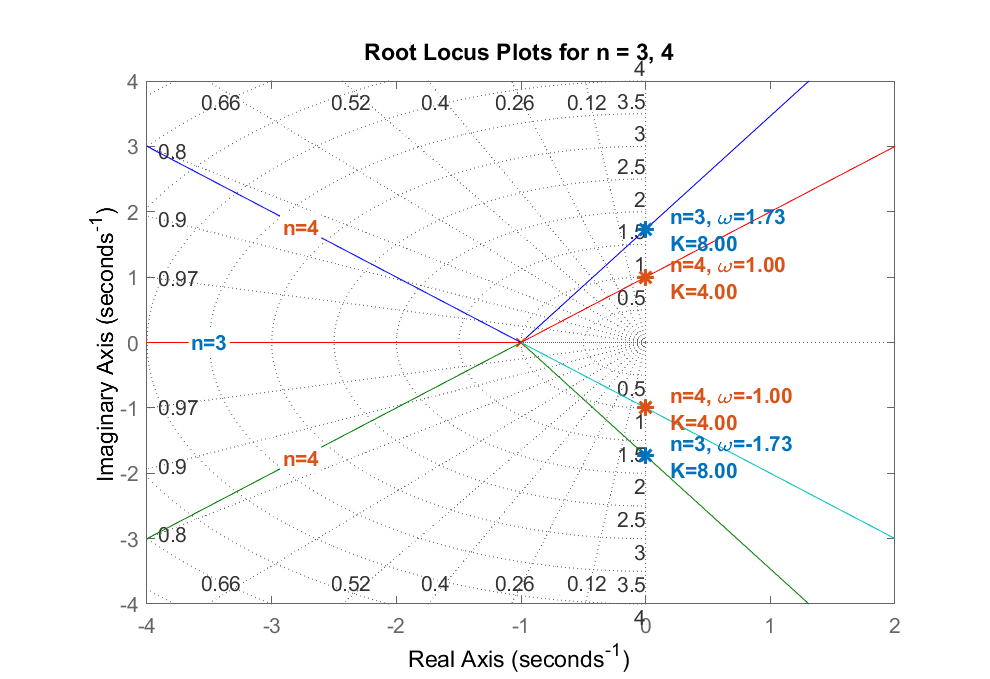}
    \caption{Root locus of the block diagram showing the values of $\omega$ and the gain where it crosses the imaginary axis for $n = 3, ~4$ and $\alpha = 1$.}
        \label{Rlocus}
        \end{center}
    \end{figure}

    \begin{figure}[htbp]
        \begin{center}
        \includegraphics[trim={1.5cm 0 1.5cm 0}, clip,scale=0.9]{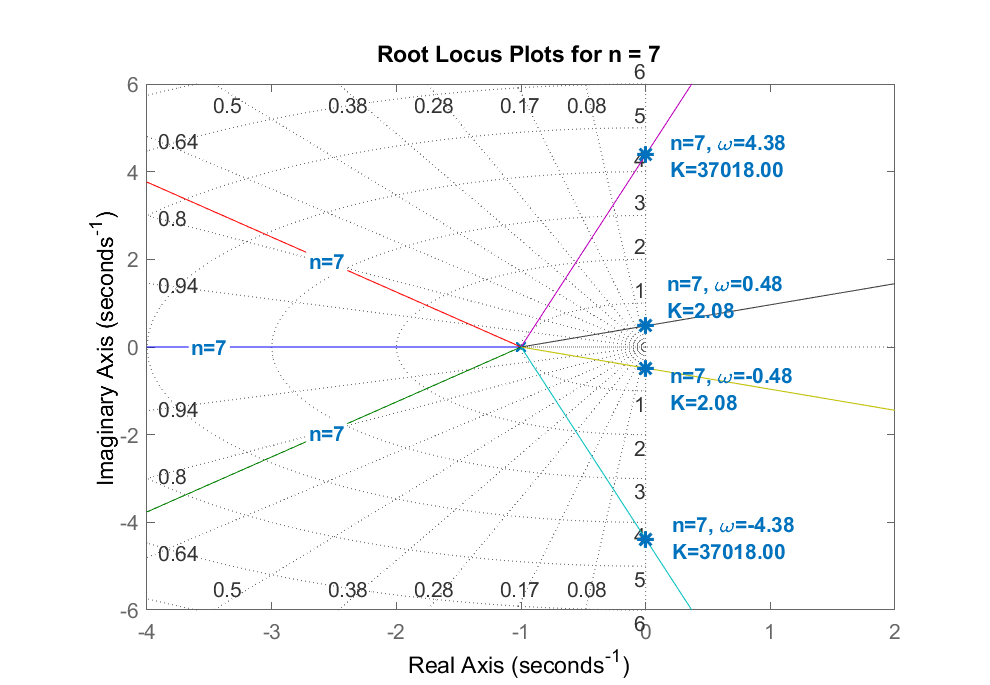}
        \caption{Root locus of the block diagram showing the values of $\omega$ and the gain where it crosses the imaginary axis for $n = 7$ and $\alpha = 1$. There are multiple crossings of the imaginary axis.}
        \label{Rlocus_n7}
        \end{center}
    \end{figure}
\end{rem}

\begin{rem}
    Block diagrams like in Fig. \ref{Fig4} can be further analyzed using the methods of describing functions and harmonic balance \cite{b19, b20}. 
    The main idea is to replace the nonlinear block with the first harmonic component, say $N (A, \omega)$. 
    This ``transfer function” $N (A, \omega)$ is usually dependent on the amplitude $A$ of the input signal, unlike in the purely linear analysis. 
    The resulting equations are solved to estimate the amplitude and frequency of possible oscillations. 
    While this method is useful in analyzing such systems, it is only approximate.
\end{rem}

\begin{rem}
    While theoretical frameworks have been used to analyze the block diagram models shown here, the emphasis has primarily been on the stability of the equilibrium point or on the existence of oscillations. 
    Examples include the use of small-gain theorem on monotone systems used in systems biology \cite{b13}. 
    More recently, classical nonlinear control topics such as the Lur’e problem and the circle criterion have been generalized for the analysis of oscillations using dominance theory \cite{b14, b15, b16, b17, b18}.
\end{rem}

\section{General Representation}
Based on the above results, we propose a general block diagram representation that includes oscillators from both types (Fig. \ref{Fig7}). 
Example of a parameter set that can give square wave oscillations with fixed amplitude and variable frequency are $b = a = 1, ~K_+ = 2, ~\alpha = 1, ~n = 1, ~K_- = 2$. 
Example of a parameter set that can give sinusoidal oscillations with fixed frequency and variable amplitude are $b = a = 1, ~K_+ = 0, ~\alpha = 1, ~n = 3, ~K_- = 8$.

\begin{figure}[htbp]
    \begin{center}
    \includegraphics[scale=0.5]{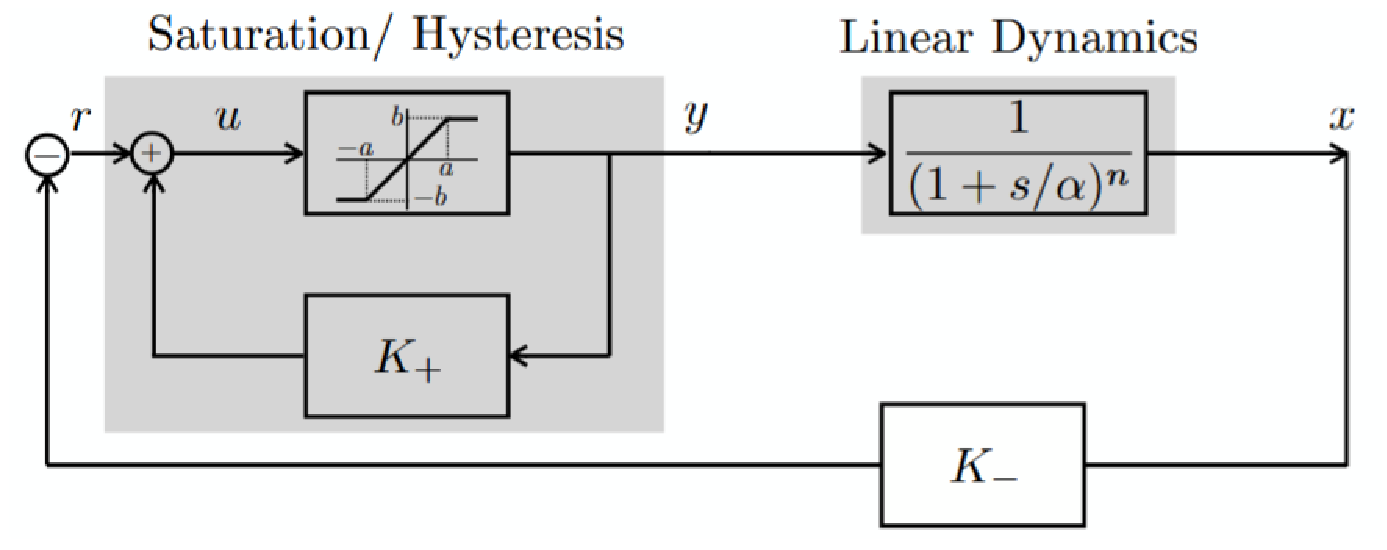}
    \caption{Unified block diagram representation of both oscillator types.}
    \label{Fig7}
    \end{center}
\end{figure}

\section{Conclusion}
We found a useful connection between a design principle for biological oscillators and their block diagram models.
This connection allowed an explicit characterization of the amplitude-frequency profile of biological oscillators. 
While such models are idealizations of more realistic models from which the design principle was obtained via numerical simulations, the primary advantage of the block diagram models is in providing a theoretical basis for the design principle. 
Further, the connection provided a succinct representation of the role of positive and negative feedback in the overall amplitude-frequency profile. 
These results should help develop systems-level insight in more realistic models of biological oscillators.

\end{document}